\def\BibTeX{{\rm B\kern-.05em{\sc i\kern-.025em b}\kern-.08em
    T\kern-.1667em\lower.7ex\hbox{E}\kern-.125emX}}
\newcommand{\AND}{\textsc{And}}
\newcommand{\OR}{\textsc{Or}}
\newcommand{\NOR}{\textsc{Nor}}
\newcommand{\INV}{\textsc{Inv}}
\newcommand{\OAI}{\textsc{Oai}}
\newcommand{\AOP}{\textsc{And}-\textsc{Or} Path}
\newcommand{\aop}{\textsc{And}-\textsc{Or} path}
\newcommand{\BL}{LO}
\newcommand{\huffmancircuit}{undetermined circuit}
\newcommand{\huffmancircuits}{undetermined circuits}
\newcommand{\Huffmancircuit}{Undetermined circuit}
\DeclareMathOperator{\argmin}{argmin}
\DeclareMathOperator{\generalout}{out}
\DeclareMathOperator{\out}{\generalout(C)}
\newcommand{\dgate}{d_{\text{gate}}}
\newcommand{\ddist}{d_{\text{dist}}}
\DeclareMathOperator{\N}{\mathbb N}
\DeclareMathOperator{\R}{\mathbb R}
\newcommand{\deltamin}{\delta_{\text{min}}}
\newcommand{\deltarangeit}{\delta_{\text{it}}}
\newcommand{\targetdelta}{\delta_{\text{target}}}
\newcommand{\itrange}{\text{num}_{\text{it}}}
\newcommand{\maxpathlength}{m_{\text{max}}}
\definecolor{darkgreen}{RGB}{0, 100, 0}
\definecolor{darkblue}{RGB}{0, 0, 100}
\colorlet{atcolor}{black}
\colorlet{and_color}{cyan!50}
\colorlet{or_color}{yellow!50}
\newcommand{\rmColorBtr}{darkgreen}
\newcommand{\rmColorEql}{blue}
\newcommand{\rmColorUkn}{black}
\newcommand{\rmDecorate}[2][]{%
 \ifthenelse{\equal{#1}{p}}{%
   \IfDecimal{#2}{\SI[retain-explicit-plus, group-minimum-digits=3]{#2}{\percent}}{#2}%
 }{%
   \IfDecimal{#2}{\num[retain-explicit-plus, group-minimum-digits=3]{#2}}{#2}%
 }
}
\newcommand{\rmBtr}[2][]{\textcolor{\rmColorBtr}{\rmDecorate[#1]{#2}}}
\newcommand{\rmEql}[2][]{\textcolor{\rmColorEql}{\rmDecorate[#1]{#2}}}
\newcommand{\rmUkn}[2][]{\textcolor{\rmColorUkn}{\rmDecorate[#1]{#2}}}
\newcommand{\chipName}[1]{#1}
\newcommand{\showid}[2]{#2}
\newcommand{\showwiredelay}[1]{}
\newcommand{\boxedDescription}[5]{%
 {%
  \setlength{\parskip}{0.6\topsep}
   {%
    \setlength{\fboxsep}{5pt}
     {%
      \setlength{\parindent}{0pt}
       \framebox[\columnwidth][c]{%
        \begin{tabular}{%
          @{\hspace{\fboxsep}}
          l@{}
          @{\hspace{\fboxsep}}
          p{\columnwidth-3\fboxsep-\maxof{\widthof{#1}}{\widthof{#2}}}@{\hspace{\fboxsep}}
        }%
         \multicolumn{2}{p{\columnwidth-2\fboxsep}}{%
          {\Large{\textsc{#3}}}} \\ %
         \textit{#1:} & #4 \\ %
         \textit{#2:} & #5 \\ %
        \end{tabular}%
       }%
     }%
   }%
 }%
}
\newcommand{\problemBase}[3]{%
 \boxedDescription{Instance}{Task}{#1}{#2}{#3}%
}
\newcommand{\problemIndex}[4]{
 \index{#2|(}%
 \problemBase{#1}{#3}{#4}%
 \index{#2|)}%
}
\newcommand{\problem}[4]{
 \problemIndex{#1}{#1@\textsc{#1}}{#2}{#3}
 \expandafter\newcommand\csname #4\endcsname{\textsc{#1}~problem}%
}
\newtheorem{theorem}{Theorem}
\newtheorem{lemma}[theorem]{Lemma}
\theoremstyle{definition}
\newtheorem{definition}[theorem]{Definition}
\begin{document}

\title{Delay Optimization of Combinational Logic \\ by \AOP{} Restructuring}

\author{\IEEEauthorblockN{Ulrich Brenner and Anna Hermann}
\IEEEauthorblockA{\textit{Research Institute for Discrete Mathematics, University of Bonn} \\
\{brenner,hermann\}@dm.uni-bonn.de}
}

\maketitle

\addtolength{\textfloatsep}{-15pt}
\addtolength{\dbltextfloatsep}{-5pt} 
\addtolength{\floatsep}{-5pt}

%
\begin{abstract}

We propose a dynamic programming algorithm
that constructs delay-optimized circuits for alternating \aop{}s
with prescribed input arrival times.
Our algorithm fulfills best-known approximation guarantees
and empirically outperforms earlier methods
by exploring a significantly larger portion of the solution space.

Our algorithm is the core of
a new timing optimization framework that replaces critical paths of arbitrary length by logically
equivalent realizations with less delay.
Our framework allows revising early decisions on the logical structure of the netlist
in a late step of an industrial physical design flow.
Experiments demonstrate the effectiveness
of our tool on 7nm real-world instances.

\end{abstract}

%
%
%

%
%
\maketitle

\section{Introduction}

In VLSI design, logic synthesis
turns the abstract logic specification of a chip
into a concrete representation in terms of gates.
This happens very early in the design process,
and for the following steps, the logical description typically remains fixed.
However, during physical design,
it may turn out that the chosen implementation of the logic
functionality was not the best choice, e.g., with respect to placement or timing.
Now it would be desirable to find a better suited logically equivalent representation.

We propose an algorithm that improves timing by
logic restructuring of critical combinational paths.
Optimizing a path boils down to optimizing an \aop{},
i.e., a Boolean function of type
${t_0 \land (t_1 \lor (t_2 \land (t_3 \lor (t_4 \land (  \dots t_{m-1}) \dots )}$,
see \cite{WerberEtal2007}.

Besides, \aop{}s have an important application in the construction
of adder circuits. The carry bit computation in an
adder (which is the critical part) is equivalent to the evaluation of
an \aop{}.
The tasks of \aop{} and adder optimization are actually equivalent
concerning timing
if circuit size is disregarded.

Many efficient adder circuits (e.g.,\ \cite{Brent1982,KoggeStone1973,Khrapchenko1970})
have been proposed in the previous
decades and could hence be used for optimizing \aop{}s.
In terms of depth, the best approximation guarantee
for \aop{} circuits has been proven by \cite{Grinchuk2009}.
However,
these approaches optimize circuit \emph{depth},
yielding fast circuits only if all input signals arrive simultaneously.
In our setting on the most timing-critical path, this will rarely be the case.
Instead, we minimize circuit \emph{delay},
a generalization of circuit depth that takes into account
individual prescribed input \emph{arrival times}.

Some algorithms for adder optimization regard
input arrival times, but most lack provable guarantees:
For adders with general arrival times, there are a greedy heuristic \cite{YehJen2003}
and a dynamic program \cite{Liuetal2003},
but for both, no approximation ratio can be shown.
In \cite{Royetal2014}, the delay of adders is evaluated
regarding arrival times computed after physical
design, but the optimization goal is depth and not delay.

Algorithms for \aop{} optimization with input arrival times
that achieve provably good approximation ratios
are presented in \cite{BrennerHermann2019}, \cite{RautenbachEtal2006} and \cite{HeldSpirkl2017}.
We will explain their ideas
in \cref{sec::prev_algorithms}.
The method of \cite{RautenbachEtal2006}
is used in \cite{WerberEtal2007} to optimize general logic paths.

Our goal is to restructure critical paths of any length
with provably good approximation guarantees.
In contrast, many other approaches synthesize whole netlists and thus arbitrary Boolean functions.
As in general, finding a logically equivalent implementation of a given circuit with, say, minimum depth
is an NP-hard problem,
these approaches only replace
sub-circuits of constant size by alternative realizations
(see e.g., \cite{AmaruEtal2017,Cortadella2003,MishchenkoEtal2011,Stoketal1996,Plazaetal2008}).
Here, the new solution is logically correct by construction,
but an extension to larger sub-circuits is hardly possible.

Our main contributions are:
\begin{itemize}
\item We propose a new dynamic program for delay optimization of \aop{}s.
   In fact, the algorithm solves a more general problem, the optimization
   of so-called extended \aop{}s.
   We describe how decisions on the structure of sub-solutions
   can be postponed until these sub-solutions are combined.
   This reduces rounding effects that are inherent
   in previous algorithms.
\item Our algorithm fulfills best known theoretical delay guarantees
   as it is a common generalization of all previously best approaches
   \cite{HeldSpirkl2017,BrennerHermann2019,RautenbachEtal2006}.
   Moreover, we demonstrate in experiments that
   we improve delay significantly compared to those.
\item We compute lower bounds on the best possible delay of \aop{}s.
   On 89\% of our test instances, the result of our algorithm
   matches the lower bound and is thus provably optimum.
\item We propose a framework for timing optimization of combinational paths of arbitrary length
   based on \cite{WerberEtal2007} with our \aop{} restructuring algorithm as a core routine.
   The generic delay model used in our core algorithm allows
   incorporating physical locations.
   Our framework contains several classical timing-optimization
   tools and
   -- in contrast to the simple mapping used in
   \cite{WerberEtal2007} --
   an evolved technology-mapping method~\cite{Elbert2017}.
\item Experiments on recent
   industrial 7nm chips show the efficiency and effectiveness of
   our framework. We improve worst slack and total slack considerably without any
   impact on other metrics.
\end{itemize}

The rest of the paper is organized as follows.
In \cref{sec:aop alg}, we define the \aop{} optimization problem,
survey known approaches, and present our new approximation algorithm.
\cref{sec:flow} describes our logic restructuring framework.
Experimental results are shown in \cref{sec:experiments},
and \cref{sec:conclusion} contains concluding remarks.

\section{And-Or Path Optimization} \label{sec:aop alg}

Note that in this section,
we use a simplified linear delay model
 with unit gate delay and zero wire delay.
In \cref{sec:normalization},
we will generalize this model
to adapt to our application in physical design.

\subsection{Problem formulation}

For us, a \emph{circuit} $C$ is a connected acyclic digraph
whose nodes can be partitioned into two sets:
\emph{inputs} with no incoming edges representing Boolean variables, and
\emph{gates} representing an elementary Boolean function (mostly \AND{}2 or \OR{}2,
i.e., \AND{} and \OR{} gates with fan-in two),
where only a single gate $\out$ called \emph{output} has no outgoing edges.
An \emph{\aop{}} on inputs $t_0,\dots,t_{m-1}$ is a Boolean formula of type
\begin{align*}
   g(t_0,\dots,t_{m-1}) &= t_0 \land (t_1 \lor (t_2 \land (t_3 \lor (t_4 \land (  \dots t_{m-1}) \dots ) \text{ or}\\
   g^*(t_0,\dots,t_{m-1}) &= t_0 \lor (t_1 \land (t_2 \lor (t_3 \land (t_4 \lor (  \dots t_{m-1}) \dots )\,.
\end{align*}

On the left-hand side of \cref{fig::arrival_time_computation_example},
a circuit for the \aop{} $g(t_0, t_1, t_2, t_3, t_4)$ is shown.
Given individual \emph{arrival times} $a(t_i) \in \R$ for each input signal $t_i$,
$i=0, \dots, m-1$, we ask for a Boolean circuit
computing $g(t_0,\dots,t_{m-1})$ that consists of \AND2{} and \OR2{} gates only
and is timing-wise best possible in the following sense:
We assume that traversing a gate takes $1$ time unit, so the
\emph{gate arrival time} is the maximum of its predecessors' arrival
times plus $1$. By scanning a circuit $C$ from the inputs to the output,
we can compute arrival times at all gates.
The \emph{delay} of a circuit is defined as the arrival time at $\out$.
Summarizing, we study the following problem:

\problem{And-Or Path Optimization}
        {$m \in \N$,
         Boolean input variables $t = (t_0, \dotsc, t_{m-1})$,
         arrival times $a(t_0), \dotsc, a(t_{m-1}) \in \R$.}
        {Compute a circuit $C$ using only \AND{}2 and \OR{}2 gates realizing $g(t)$ or $g^*(t)$ with minimum possible delay.}
        {praopdelayopt}

        Figure~\ref{fig::arrival_time_computation_example} shows how
gate arrival times are computed in two circuits that
both realize the \aop{} $g(t_0, t_1, t_2, t_3, t_4)$.
The circuits have a delay of 7 and 6, respectively.
Note that in the special case when all input arrival times are $0$,
circuit delay is exactly circuit depth, i.e., the length of a longest
directed path.

\begin{figure}

\newcommand{\scalefactor}{1.2}
\newcommand{\subfigwidth}{0.36\columnwidth}
\newcommand{\picwidth}{0.97\columnwidth}

\centering
\adjustbox{valign=t}{
\begin{subfigure}[]{0.34\columnwidth}

\resizebox{\picwidth}{!}{%
\begin{tikzpicture}

\node[ outer sep=0pt, or gate US, fill=or_color, draw, logic gate inputs=nn, rotate=270, thick] at (5,2) (or1){\rotatebox{90}{\color{atcolor}$4$}};
\node[ outer sep=0pt, and gate US, fill=and_color, draw, logic gate inputs=nn, rotate=270, thick] at (4,1) (and1){\rotatebox{90}{\color{atcolor}$5$}};

\node[ outer sep=0pt, or gate US, fill=or_color, draw, logic gate inputs=nn, rotate=270, thick] at (3,0) (or4){\rotatebox{90}{\color{atcolor}$6$}};
\node[ outer sep=0pt, and gate US, fill=and_color, draw, logic gate inputs=nn, rotate=270, thick] at (2,-1) (and2){\rotatebox{90}{\color{atcolor}$7$}};

\node[outer sep=0pt, atcolor, scale=\scalefactor] (a5) at (1.5, 3.5){$2$};
\node[outer sep=0pt, atcolor, scale=\scalefactor] (a4) at (2.5, 3.5){$2$};
\node[outer sep=0pt, atcolor, scale=\scalefactor] (a3) at (3.5, 3.5){$3$};
\node[outer sep=0pt, atcolor, scale=\scalefactor] (a2) at (4.5, 3.5){$1$};
\node[outer sep=0pt, atcolor, scale=\scalefactor] (a1) at (5.5, 3.5){$3$};

\node[outer sep=0pt, scale=\scalefactor] (i5) at (1.5, 3.){$t_0$};
\node[outer sep=0pt, scale=\scalefactor] (i4) at (2.5, 3.){$t_1$};
\node[outer sep=0pt, scale=\scalefactor] (i3) at (3.5, 3.){$t_2$};
\node[outer sep=0pt, scale=\scalefactor] (i2) at (4.5, 3.){$t_3$};
\node[outer sep=0pt, scale=\scalefactor] (i1) at (5.5, 3.){$t_4$};

\draw[thick] (or1.output) -- (and1.input 1);
\draw[thick] (i2) -- (or1.input 2);
\draw[thick] (i3) -- (and1.input 2);
\draw[thick] (i1) -- (or1.input 1);

\draw[thick] (or4.output) -- (and2.input 1);

\draw[thick] (i5) -- (and2.input 2);

\draw[thick] (i4) -- (or4.input 2);
\draw[thick] (and1.output) -- (or4.input 1);

\end{tikzpicture}
}
\end{subfigure}
}
\adjustbox{valign=t}{
\begin{subfigure}[]{\subfigwidth}

\resizebox{\picwidth}{!}{%
\begin{tikzpicture}

\node[outer sep=0pt, atcolor, scale=\scalefactor] (a5) at (1.5, 3.5){$2$};
\node[outer sep=0pt, atcolor, scale=\scalefactor] (a4) at (2.5, 3.5){$2$};
\node[outer sep=0pt, atcolor, scale=\scalefactor] (a3) at (3.5, 3.5){$3$};
\node[outer sep=0pt, atcolor, scale=\scalefactor] (a2) at (4.5, 3.5){$1$};
\node[outer sep=0pt, atcolor, scale=\scalefactor] (a1) at (5.5, 3.5){$3$};

\node[outer sep=0pt, scale=\scalefactor] (i5) at (1.5, 3.){$t_0$};
\node[outer sep=0pt, scale=\scalefactor] (i4) at (2.5, 3.){$t_1$};
\node[outer sep=0pt, scale=\scalefactor] (i3) at (3.5, 3.){$t_2$};
\node[outer sep=0pt, scale=\scalefactor] (i2) at (4.5, 3.){$t_3$};
\node[outer sep=0pt, scale=\scalefactor] (i1) at (5.5, 3.){$t_4$};

\node[ outer sep=0pt, or gate US, fill=or_color, draw, logic gate inputs=nn, rotate=270, thick] at (3,2) (or1){\rotatebox{90}{\color{atcolor} $4$}};
\node[ outer sep=0pt, or gate US, fill=or_color, draw, logic gate inputs=nn, rotate=270, thick] at (4,2) (or2){\rotatebox{90}{\color{atcolor}$3$}};
\node[ outer sep=0pt, and gate US, fill=and_color, draw, logic gate inputs=nn, rotate=270, thick] at (2.5,1) (and1){\rotatebox{90}{\color{atcolor}$5$}};
\node[ outer sep=0pt, or gate US, fill=or_color, draw, logic gate inputs=nn, rotate=270, thick] at (4.5,1) (or3){\rotatebox{90}{\color{atcolor}$4$}};
\node[ outer sep=0pt, and gate US, fill=and_color, draw, logic gate inputs=nn, rotate=270, thick] at (3.5,0) (and2){\rotatebox{90}{\color{atcolor}$6$}};

\draw[thick] (i3) -- (or1.input 1);
\draw[thick] (i4) -- (or1.input 2);
\draw[thick] (i2) -- (or2.input 1);
\draw[thick] (i4) -- (or2.input 2);
\draw[thick] (or1.output) -- (and1.input 1);
\draw[thick] (i5) -- (and1.input 2);
\draw[thick] (or2.output)   --  (or3.input 2);
\draw[thick] (i1) -- (or3.input 1);
\draw[thick] (or3.output) -- (and2.input 1);
\draw[thick] (and1.output)   -- (and2.input 2);


\node[outer sep=0pt, atcolor] (z) at (2, -1.3){};

\end{tikzpicture}
}

\end{subfigure}
}

\caption{Two circuits computing the function $g(t_0, \dotsc, t_4) = t_0 \land (t_1 \lor (t_2 \land (t_3 \lor t_4)))$
with input and gate arrival times.}
\label{fig::arrival_time_computation_example}
\end{figure}

Given an instance consisting of inputs $t_0, \dotsc, t_{m-1}$
with arrival times $a(t_0), \dots, a(t_{m-1}),$ we define
the \emph{weight}
$
    W := \sum_{i=0}^{m-1} 2^{a(t_i)}
$.
It is not too difficult to see that
$\lceil\log_2(W)\rceil$ is a lower bound for the delay of any binary
circuit computing an \aop{} for inputs $t_0,\dots,t_{m-1}$
with arrival times $a(t_0),\dots,a(t_{m-1}) \in \N$
(this boils down to Kraft's inequality \cite{Kraft1949};
see \cite{RautenbachEtal2006} for a concise proof).

Optimizing $g(t)$ and $g^*(t)$
is equivalently hard:
By the duality principle of Boolean algebra,
any circuit for $g(t)$ consisting of \AND{} and \OR{} gates
can be transformed into a circuit for $g^*(t)$ with the
same delay by exchanging \AND{} and \OR{} gates and vice versa.

\subsection{Previous Algorithms}\label{sec::prev_algorithms}

A common approach for \aop{} optimization is the application of
\emph{recursion formulas} that allow reducing the problem
to the construction of circuits for \aop{}s with fewer inputs.

The algorithm by Rautenbach et. al \cite{RautenbachEtal2006} is based on the
following equation (for $\lambda \in \N$ with $2 \lambda < m - 2$):
\begin{align}\label{eq::rec_rautenbach_primal}
   g(t_0,\dots,t_{m-1}) & =  g(t_{0},\dots,t_{2\lambda-1}) \\
                       & \lor   \big( t_0 \land t_2 \land t_4 \land \dots \land t_{2\lambda-2}
                             \land g(t_{2\lambda},\dots, t_{m-1}) \big) \nonumber
\end{align}
To see the correctness of \labelcref{eq::rec_rautenbach_primal},
check that $g(t_0,\dots,t_{m-1})$ is true
exactly in the following two cases:
\begin{itemize}
\item $g(t_0,\dots,t_{2\lambda-1})$ is true (then the
  other inputs do not matter)
\item $g(t_{2\lambda}, \dots, t_{m-1})$ is true and the value ``true''
  is propagated to the output because the inputs
  $t_0, t_2,  t_4, \dots, t_{2\lambda-2}$ are all true
\end{itemize}
See \cite{RautenbachEtal2006} for a detailed proof.
Using formula~(\ref{eq::rec_rautenbach_primal}),
an \aop{} circuit on inputs ${t_0,\dots,t_{m-1}}$ can be
constructed by combining \aop{} circuits on inputs
$t_{0}, \dots, t_{2\lambda-1}$ and on inputs
$t_{2\lambda}, \dots, t_{m-1}$ and a circuit for a multi-input {\sc And}
on the inputs $t_0, t_2, \dotsc, t_{2\lambda-2}$.
Using \labelcref{eq::rec_rautenbach_primal} in a dynamic program with running time $\mathcal O(m^3)$,
the authors of \cite{RautenbachEtal2006}
construct \aop{} circuits with delay at most
$1.441\log_2 (W) + 3$.
Held and Spirkl \cite{HeldSpirkl2017} obtain a slightly better delay bound
of $1.441\log_2 (W) + 2.673$
using the dual of the following equation (for $\lambda$ with $2\lambda < m - 1$):
\begin{align} \label{eq::rec_spirkl}
g(t_0, \dotsc, t_{m-1}) &= g(t_0, \dotsc, t_{2\lambda}) \\
                          & \land \big((t_1 \lor t_3 \lor \dotsc \lor t_{2\lambda + 1})
                             \lor g(t_{2\lambda+2}, \dotsc, t_{m-1})\big) \nonumber
\end{align}
Their algorithm runs in time $\mathcal O(m \log_2^2 m)$
as they explicitly choose $\lambda$ in each recursion step.
The proof of \labelcref{eq::rec_spirkl}
is analogous to the proof of \labelcref{eq::rec_rautenbach_primal},
but here one should check in which cases the two formulas are false.
We will use \labelcref{eq::rec_spirkl} in a slightly different equivalent form
(note that $t_{2\lambda + 1} \lor g(t_{2\lambda+2}, \dotsc, t_{m-1}) =
g^*(t_{2\lambda+1}, \dotsc, t_{m-1})$):
\begin{align} \label{eq::rec_spirkl_shift}
g(t_0, \dotsc, t_{m-1}) &= g(t_0, \dotsc, t_{2\lambda}) \\
                          & \land \big((t_1 \lor t_3 \lor \dotsc \lor t_{2\lambda - 1})
                             \lor g^*(t_{2\lambda+1}, \dotsc, t_{m-1})\big) \nonumber
\end{align}

As \labelcref{eq::rec_rautenbach_primal,eq::rec_spirkl_shift}
contain functions combining a multi-input \AND{} or \OR{} with an \aop{},
we define for
$t = (t_0, \dotsc, t_{m-1})$ and $0 \leq i \leq j \leq k < m$ with $j - i$ even
the \emph{extended \aop{}s}
\begin{align*}
\phi_{i, j, k}    &:= t_i \land t_{i + 2} \land \dotsc \land t_{j-4} \land t_{j - 2} \land g(t_j, \dotsc, t_k) \quad \text{ and}\\
\phi^*_{i, j, k}  &:= t_i \lor  t_{i + 2} \lor \dotsc \lor  t_{j-4} \lor  t_{j - 2} \lor  g^*(t_j, \dotsc, t_k)\,.
\end{align*}
The extended \aop{} $\phi_{0, 4, 12}$ is depicted in \cref{ext_instance}.
From the splits \labelcref{eq::rec_rautenbach_primal,eq::rec_spirkl_shift},
using extended \aop{}s as a more flexible replacement for sub-functions,
we deduce the splits
\begin{alignat}{5}
\phi_{0, 0, m-1} &= \phi_{0, 0, 2\lambda-1}
             &&\lor \phi_{0, 2\lambda, m-1}  && \text{ for } {\scriptstyle 1 \leq \lambda \leq \frac{m - 1}{2} \label{eq::rec_3_aop}} \\
\phi_{0, 0, m-1} &= \phi_{0, 0, 2\lambda}
             &&\land \phi^*_{1, 2\lambda+1, m-1} && \text{ for } {\scriptstyle  0 \leq \lambda \leq \frac{m - 2}{2}} \label{eq::rec_2_aop}
\shortintertext{that can be generalized to extended \aop{}s as in}
\phi_{i, j, k} &= \phi_{i, j, j + 2\lambda - 1}
             &&\lor \phi_{i,j+2\lambda, k} && \text{ for } {\scriptstyle 1 \leq \lambda \leq \frac{k - j}{2}\,, \label{eq::rec_3}} \\
\phi_{i, j, k} &= \phi_{i, j ,j + 2\lambda}
             &&\land \phi^*_{j + 1, j + 2\lambda + 1, k} && \text{ for } {\scriptstyle 0 \leq \lambda \leq \frac{k - j - 1}{2}}\,. \label{eq::rec_2}
\end{alignat}

Note that in \labelcref{eq::rec_3} and \labelcref{eq::rec_2},
the functions on the right-hand side depend on fewer inputs than $\phi_{i, j, k}$.
\cref{fig::arrival_time_computation_example} shows an example
for split~\labelcref{eq::rec_2_aop} with $\lambda = 1$,
and \cref{ext_instance,ext_split} for split~\labelcref{eq::rec_2} with $\lambda = 2$.

Using split~\labelcref{eq::rec_2}
and its dual, Grinchuk~\cite{Grinchuk2009}
proves the upper bound $\log_2 m + \log_2 \log_ 2 m + 3$ on the depth of \aop{} circuits,
and Brenner and Hermann \cite{BrennerHermann2019} give an algorithm for arbitrary integer arrival times
with running time $\mathcal O(m^2 \log_2 m)$ and a delay bound of
\begin{equation} \label{our bound}
   \log_2 W + \log_2 \log_2 m + \log_2 \log_2 \log_2 m + 4.3.
\end{equation}

In the special case when $k - j \leq 1$,
$\phi_{i, j, k}$ is actually a multi-input \AND{},
and the function can be realized by a delay-optimum circuit
using a greedy algorithm called \emph{Huffman coding}:

\begin{theorem}[Golumbic \cite{Golumbic1976}, based on Huffman \cite{Huffman1952}]\label{theorem::huffman}
Given inputs $t_0, \dotsc, t_{m-1}$ with arrival times $a(t_i)$,
a delay-optimum circuit for the Boolean function $t_0 \land \dotsc \land t_{m-1}$
(or $t_0 \lor \dotsc \lor t_{m-1}$)
can be constructed in $\mathcal O(m \log_2 m)$ time.
If $a(t_i) \in \N$ for all ${i = 0, \dotsc, m-1}$, then
the delay of an optimum circuit is $\lceil \log_2 (W) \rceil$.
\end{theorem}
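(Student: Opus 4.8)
The plan is to reduce the statement to the classical Huffman coding theorem for prefix codes. First I would recall the correspondence between binary circuits computing $t_0 \land \dotsc \land t_{m-1}$ using only \AND2{} gates and binary trees with $m$ leaves: each leaf corresponds to one input $t_i$, and the depth of a leaf in the tree equals the length of the directed path from $t_i$ to the output. Conversely, any such circuit can be assumed (after removing redundant gates, since \AND{} is associative and commutative) to have exactly this tree shape. The delay of the circuit is then $\max_i \bigl(a(t_i) + \ell_i\bigr)$, where $\ell_i$ is the depth of leaf $i$. So the optimization problem becomes: find a binary tree with $m$ leaves minimizing $\max_i \bigl(a(t_i) + \ell_i\bigr)$.

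Next I would invoke the standard result, essentially due to Huffman~\cite{Huffman1952} in its weighted form and made explicit by Golumbic~\cite{Golumbic1976}, that this is solved by a greedy bottom-up merging procedure: repeatedly take the two subtrees (initially leaves) with the smallest current ``arrival values'', merge them under a new \AND{} gate whose arrival value is one plus the larger of the two, and reinsert. Maintaining the candidate subtrees in a priority keyed by arrival value, each of the $m-1$ merge steps costs $\mathcal{O}(\log_2 m)$, giving the claimed $\mathcal{O}(m \log_2 m)$ running time. Correctness follows from an exchange argument: in some optimum tree, two inputs with the smallest arrival times may be assumed to be siblings at maximum depth, so the first greedy merge is without loss of generality, and the remaining problem is an instance of the same type with $m-1$ inputs (the merged pair replaced by a single pseudo-input), so induction applies. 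The dual statement for \OR2{} gates is immediate by the duality principle already stated in the excerpt (exchange \AND{} and \OR{}).

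For the integer case, I would argue the delay equals $\lceil \log_2 W \rceil$ in two directions. The lower bound $\lceil \log_2 W \rceil$ is exactly the Kraft-inequality bound already cited in the excerpt, valid for any binary circuit. For the matching upper bound, since all $a(t_i)$ are integers, all arrival values stay integral throughout the greedy procedure, and one shows by induction that after processing a set of subtrees whose arrival values are $b_1, \dotsc, b_r$ the algorithm produces a single tree of arrival value at most $\bigl\lceil \log_2 \sum_{s} 2^{b_s} \bigr\rceil$; the base case is a single leaf, and the inductive step uses that merging the two smallest values $b, b'$ with $b \le b'$ yields $b' + 1 \le \lceil \log_2(2^b + 2^{b'}) \rceil$ plus a counting argument that the greedy choice never wastes capacity (equivalently, that a full Kraft-tight assignment of depths exists for integer weights). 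Applying this to the initial multiset $\{a(t_0), \dotsc, a(t_{m-1})\}$ gives delay at most $\lceil \log_2 W \rceil$, matching the lower bound.

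The main obstacle is the integer-case upper bound: the clean induction ``arrival value $\le \lceil \log_2(\text{sum of } 2^{b_s}) \rceil$'' is not quite preserved step by step in the naive form, because the greedy rule merges the two globally smallest values rather than two equal ones, so one must be careful to track that the invariant holds for the full collection at each stage (or, alternatively, prove Kraft-tightness directly: for integer weights there is a depth assignment $\ell_i$ with $\sum_i 2^{-\ell_i} \le 1$ and $a(t_i) + \ell_i \le \lceil \log_2 W\rceil$ for all $i$, realizable as a binary tree). Everything else — the circuit/tree correspondence, the exchange argument for correctness, the running time, and the dual version — is routine.
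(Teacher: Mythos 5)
Your proposal is correct and follows the standard argument that the paper itself relies on only by citation: the paper gives no proof of this theorem, attributing it to Golumbic~\cite{Golumbic1976} and Huffman~\cite{Huffman1952}, and your reconstruction (tree correspondence, greedy merge with exchange argument, Kraft lower bound, and the converse-Kraft depth assignment $\ell_i = \lceil \log_2 W\rceil - a(t_i)$ for the integer upper bound) is exactly the classical route. The alternative you flag in your last paragraph is indeed the clean way to close the integer case, so no gap remains.
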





%

\subsection{Our Approach} \label{sec: aop alg generalization}

We present an algorithm for \aop{} optimization with prescribed input arrival times
that generalizes any of
the algorithms in \cite{BrennerHermann2019,HeldSpirkl2017,RautenbachEtal2006}.
In particular, on any instance, the delay of our solution is at least as good as the
delay computed with any of the three algorithms,
and on most instances, it is better, cf. \cref{delay-comp}.

To simplify notations, hereafter we assume that all arrival times are
integral. Still, our implementation allows arbitrary arrival times.


Recall that in the \aop{} optimization problem,
we aim at computing a circuit containing only fan-in-$2$ gates.
However, in intermediate steps, we allow a larger
fan-in for the gate computing the output of the circuit.
This leads to the following definition.

\begin{definition}
An {\em \huffmancircuit{}} is a Boolean circuit $C$ consisting of
\AND{} and \OR{} gates only such that all gates with the possible exception
of $\out$ have fan-in two. With given input arrival times,
the {\em weight} of $C$ is
$\text{weight}(C) := \sum_{i=1}^k 2^{d_i}$, where $d_1, \dots, d_k$ are the arrival times at the
predecessors of $\out$.
\end{definition}

In Figure~\ref{fig::arrival_time_computation_example},
the weight of the left and right \huffmancircuit{} is $2^{2} + 2^{6} = 68$ and
$2^5 + 2^4 = 48$, respectively.
Figure~\ref{ext_impl} displays an \huffmancircuit{}
with fan-in $5$ at the output gate.

For \huffmancircuit{}s, we do not yet specify how we realize the output gate
by fan-in-2 gates. This allows greater flexibility when combining
several such circuits to a larger circuit.
The following \lcnamecref{lem-soft-huffman} shows that optimizing the weight of an \huffmancircuit{}
can be used to compute fan-in-2 circuits with small delay.

\begin{lemma} \label{lem-soft-huffman}
Given an \huffmancircuit{} $C$,
we can construct a
Boolean circuit using \AND2{} and \OR2{} gates only that
computes the same Boolean function as $C$ with delay at most
$\lceil \log_2(\text{weight}(C)) \rceil$.
\end{lemma}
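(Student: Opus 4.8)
The plan is to take the output gate of $C$, which has some fan-in $k \ge 2$ and predecessors with arrival times $d_1, \dots, d_k$, and replace this single large-fan-in gate by a binary tree of fan-in-2 gates of the same type (\AND{} or \OR{}), chosen to minimize the resulting delay. Since every other gate of $C$ already has fan-in two, the only modification needed is at the output. First I would observe that the sub-circuit feeding the output gate is exactly a multi-input \AND{} (or \OR{}) on $k$ ``virtual inputs'' whose arrival times are $d_1, \dots, d_k$; by \cref{theorem::huffman} (Golumbic/Huffman), a delay-optimum fan-in-2 realization of this multi-input gate can be constructed, and since the $d_i$ need not be integral, I would invoke the general (non-integral) part of that theorem to get \emph{some} optimum circuit rather than relying on the $\lceil \log_2 W \rceil$ formula directly.

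Next I would bound the delay of that Huffman realization. The clean way is to first round each $d_i$ up to $\lceil d_i \rceil \in \N$; this only increases arrival times, hence only increases the delay of any circuit built on top of them, so it suffices to bound the delay in the rounded instance. In the rounded instance the weight of the multi-input gate is $\sum_{i=1}^k 2^{\lceil d_i\rceil} \ge \sum_{i=1}^k 2^{d_i} = \text{weight}(C)$, but we need an upper bound, so instead I would argue more carefully: by \cref{theorem::huffman} applied with arrival times $\lceil d_1 \rceil, \dots, \lceil d_k \rceil$, there is a fan-in-2 \AND{}/\OR{}-circuit computing the multi-input gate with delay exactly $\bigl\lceil \log_2\bigl(\sum_{i=1}^k 2^{\lceil d_i \rceil}\bigr)\bigr\rceil$. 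The point is then that $\bigl\lceil \log_2\bigl(\sum_i 2^{\lceil d_i \rceil}\bigr)\bigr\rceil = \bigl\lceil \log_2\bigl(\sum_i 2^{d_i}\bigr)\bigr\rceil = \lceil \log_2(\text{weight}(C))\rceil$: indeed $\sum_i 2^{\lceil d_i\rceil} \le \sum_i 2^{\lceil \log_2(\text{weight}(C))\rceil}$ is too lossy, so the correct elementary fact to use is that for integers, rounding each $2^{d_i}$ up to $2^{\lceil d_i \rceil}$ and then taking $\lceil \log_2 (\cdot) \rceil$ of the sum gives the same value as $\lceil \log_2(\sum_i 2^{d_i})\rceil$ whenever the final value is an integer $N$ with $\sum_i 2^{d_i} \le 2^N$ — which holds here — together with $2^{N-1} < \sum_i 2^{d_i}$ forcing $\sum_i 2^{\lceil d_i\rceil} > 2^{N-1}$ as well. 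I would state this as a short self-contained sublemma.

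Finally I would stitch the pieces together: replace $\out$ in $C$ by the binary Huffman tree just constructed, feeding its $i$-th leaf from the predecessor of $\out$ that had arrival time $d_i$. The resulting circuit uses only \AND2{} and \OR2{} gates (the rest of $C$ was already binary), computes the same Boolean function as $C$ (a binary tree of \AND{}s over the same inputs computes the same \AND{}, likewise for \OR{}), and has delay equal to the delay of the Huffman tree when its leaves arrive at times $d_1, \dots, d_k$, which is at most $\lceil \log_2(\text{weight}(C))\rceil$ by the argument above. This gives the claimed bound.

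The main obstacle I anticipate is the rounding bookkeeping in the middle step: \cref{theorem::huffman} gives the clean closed-form delay $\lceil \log_2 W\rceil$ only for integral arrival times, whereas the $d_i$ here are in general real. One must either (i) carefully justify that rounding the $d_i$ up does not change $\lceil \log_2(\text{weight}(C))\rceil$ and does not decrease delay, or (ii) prove directly from the structure of Huffman's algorithm that its output on real arrival times $d_1, \dots, d_k$ has delay at most $\lceil \log_2(\sum_i 2^{d_i})\rceil$ — e.g. by an induction on $k$ that merges the two currently-smallest weights, mirroring the standard Kraft-inequality proof cited after the weight definition. Option (i) is shorter to write but needs the elementary sublemma stated above; option (ii) is more robust but essentially re-proves a variant of \cref{theorem::huffman}. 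I would go with option (i).
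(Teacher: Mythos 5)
Your construction is the same as the paper's: the entire proof consists of applying Huffman coding (\cref{theorem::huffman}) to the predecessors of $\out$, viewed as inputs with arrival times $d_1,\dots,d_k$, and splicing the resulting binary tree in place of the high-fan-in output gate; everything else in $C$ is already fan-in two. Where you go wrong is the rounding sublemma that your preferred ``option (i)'' rests on. The claim that $\bigl\lceil \log_2\bigl(\sum_i 2^{\lceil d_i\rceil}\bigr)\bigr\rceil = \bigl\lceil \log_2\bigl(\sum_i 2^{d_i}\bigr)\bigr\rceil$ is false: take $k=5$ and $d_1=\dots=d_5=0.1$. Then $\sum_i 2^{d_i}=5\cdot 2^{0.1}\approx 5.36$, so the right-hand side is $3$, while $\sum_i 2^{\lceil d_i\rceil}=10$ gives $4$ on the left. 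Worse, the same example shows that the \emph{statement of the lemma itself} fails for non-integral $d_i$: any fan-in-2 tree over five predecessors has depth at least $3$, so the delay is at least $3.1>3=\lceil\log_2(\text{weight}(C))\rceil$. So neither your option (i) nor a correctly executed option (ii) can rescue the bound in that generality; your instinct that the non-integral case is the ``main obstacle'' was right, but the obstacle is insurmountable rather than a matter of bookkeeping.

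The fix is simply to delete the detour. The paper declares at the start of \cref{sec: aop alg generalization} that all arrival times are assumed integral, and since the model has unit gate delay, every gate arrival time --- in particular each $d_i$ --- is then an integer. Hence \cref{theorem::huffman} applies verbatim to the predecessors of $\out$ and yields a fan-in-2 realization of the output gate with delay exactly $\lceil\log_2(\sum_i 2^{d_i})\rceil=\lceil\log_2(\text{weight}(C))\rceil$, which is the claim. Your final ``stitching'' paragraph is correct and is all that needs to be said beyond this.
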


\begin{proof}
Apply Huffman coding with the predecessors of $\out$ as inputs (see \cref{theorem::huffman}).
\end{proof}

\cref{alg::main_algorithm} states our overall dynamic programming algorithm
for \aop{} optimization on inputs $t_0, \dotsc, t_{m-1}$,
which works as follows:
We compute a cubic-size table that contains \huffmancircuits{}
$A_{i, j, k}$ and $O_{i, j, k}$ realizing the extended \aop{} $\phi_{i, j, k}$
for all ${0 \leq i \leq j \leq k \leq m-1}$ and $j-i$ even,
where $\generalout(A_{i, j, k}) = \AND{}$ and $\generalout(O_{i, j, k}) = \OR{}$.
In particular, this computes circuits for the entire \aop{}
$\phi_{0, 0, m-1} = g(t_{0},\dots,t_{m-1})$.

Note that when $k = j$ or $k = j+1$,
the function $\phi_{i, j, k}$ is a multiple-input \AND{},
hence, in \cref{line::huffman_coding},
an optimum solution can be found by Huffman coding (see \cref{theorem::huffman}).
To compute \huffmancircuit{}s for
$\phi_{i, j, k}$ with $j-i$ even and $k > j + 1$,
we assume that we have already computed \huffmancircuits{} for $\phi$
for instances with fewer inputs.
Then, in \cref{line::huff::def C},
we can enumerate all possible choices of $\lambda$ in the splits \labelcref{eq::rec_2,eq::rec_3}
to recursively compute a circuit $C$ for $\phi_{i, j, k}$ from pre-computed solutions
(while dualizing one sub-circuit accordingly in split \labelcref{eq::rec_2}).
Since the combination of two \huffmancircuits{} is not necessarily an \huffmancircuit{},
we apply Algorithm~\ref{alg::merge_circuits}.
Here, in \cref{merge_alg::huff},
we fix the structure of the undetermined sub-circuit $C_i$
as a circuit $C_i'$ over $\{\AND2, \OR2\}$.
\cref{fig:ext} shows an example of split \labelcref{eq::rec_2}.
In \cref{alg::main_algorithm}, the circuit $C$ is stored in a candidate list $\mathcal C$ of
\huffmancircuit{}s for $\phi_{i, j, k}$.
The \huffmancircuit{}s among $\mathcal C$ with the best weight with an \AND{} or \OR{} gate at the output
are stored as $A_{i, j, k}$ in \cref{line::compute_phi_start}
and $O_{i, j, k}$ in \cref{line::compute_phi_end}, respectively.

As final circuit for $\phi_{0, 0, m-1}$,
we choose the weight-minimum circuit among $A_{0, 0, m-1}$ and $O_{0, 0, m-1}$ in \cref{line::final_choice},
made a circuit over $\{\AND{}2, \OR{}2\}$ by \cref{lem-soft-huffman}.

\input{ext_aop_split.tex}

\newcommand{\Ceta}{37}

\LinesNumbered
 \begin{algorithm}
   \DontPrintSemicolon
    \KwIn{\Huffmancircuit{}s $C_1$ and $C_2$ computing
       Boolean functions $h_1$ and $h_2$; a gate type $\circ \in \{\AND, \OR\}$.}
    \KwOut{An \huffmancircuit{} $C$ computing $h_1 \circ h_2$.}
    \BlankLine
    Add a $\circ$ gate $c_0$ to the union of the circuits $C_1$ and $C_2$.\;
    \For{$i\gets1$ \KwTo $2$}
    {
      Let $c_1,\dots, c_k$ be the predecessors of $\generalout{}(C_i)$.\;
      \If{$\generalout{}(C_i)$ \text{\em is a $\circ$ gate}}
      {
         Remove $\generalout{}(C_i)$ and add edges $(c_1, c_0),\dots, (c_k, c_0)$.\;
      }
      \Else
      {
         Use \cref{lem-soft-huffman} to construct a circuit $C_i'$ from $C_i$.\; \label{merge_alg::huff}
         Add an edge from $\generalout{}(C_i')$ to $c_0$.\;
      }
    }
   \caption{Merging $2$ \huffmancircuits{}.}
   \label{alg::merge_circuits}
 \end{algorithm}

\SetKw{KwBy}{by}
 \begin{algorithm}
   \DontPrintSemicolon
    \KwIn{Boolean variables $t_0,\dots,t_{m-1}$ with arrival times
       $a(t_0),\dots,a(t_{m-1}) \in \N$.}
    \KwOut{A Boolean circuit computing $g(t_0,\dots,t_{m-1})$.}
    \BlankLine

    \For{$l\gets1$ \KwTo $m$}
    {\label{line::mainloop}
      \For{\small $0 \leq i \leq j \leq k < m$, $j - i$ even s.t.~$\phi_{i, j, k}$ has $l$ inputs}
      {
      \If(\tcp*[f]{$\phi_{i, j, k}$ multi-input \AND{}}){$k \in \{j, j + 1\}$}
      {
         $A_{i, j, k} := $ circuit computed by Huffman coding. \label{line::huffman_coding}
      }
      \Else
      {
         \label{line::init2}
            $\mathcal C :=$ list of \huffmancircuit{}s for $\phi_{i, j, k}$
            arising from applying split \labelcref{eq::rec_3} or \labelcref{eq::rec_2} with any valid $\lambda$,
            followed by a call to \cref{alg::merge_circuits}.\; \label{line::huff::def C}
            ${A_{i, j, k} := \argmin \{ W(C) : C \in \mathcal C, \out = \AND{}\}}$.\\ \label{line::compute_phi_start}
            ${O_{i, j, k} := \argmin \{ W(C) : C \in \mathcal C, \out = \OR{}\}}$.\\ \label{line::compute_phi_end}
         }
      }
    }
    $C := \argmin\{W(A_{0,0,m-1}), W(O_{0,0,m-1})\}$.\; \label{line::final_choice}
    \Return Circuit $C'$ resulting from applying \cref{lem-soft-huffman} to $C$.\; \label{line::final_huff}
   \caption{\aop{} optimization.}
   \label{alg::main_algorithm}
 \end{algorithm}

\begin{theorem}\label{theorem::our_approx_gurantee}
\cref{alg::main_algorithm} computes a circuit with delay at most
\[\log_2 (W) + \log_2 \log_2 (m) + \log_2 \log_2 \log_2 (m) + 4.3\]
and can be implemented to run in time $\mathcal O(m^4)$.
\begin{proof} {\sc (Sketch)}
\cref{alg::main_algorithm}
considers, in particular, all recursion steps
from \cite{BrennerHermann2019}.
Using this,
one can show that
for any sub-instance $\phi_{i, j, k}$,
the algorithm computes a solution
which is at least as good as the solution computed by the algorithm from \cite{BrennerHermann2019}
and thus also meets the delay bound \labelcref{our bound}.
The running time is dominated by $\mathcal O(m^4)$ calls to \cref{alg::merge_circuits},
which can be implemented to run in constant time if only weights and delays are computed
and only the final circuit $C'$
in \cref{line::final_huff} of \cref{alg::main_algorithm} is actually constructed.
\end{proof}
\end{theorem}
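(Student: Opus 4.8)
\emph{Delay bound.} The plan is to show that the candidate list $\mathcal C$ assembled in \cref{line::huff::def C} already subsumes the construction of \cite{BrennerHermann2019}. In the terminology of this paper, that algorithm produces, for every triple $(i,j,k)$ with $j-i$ even, an \huffmancircuit{} $B_{i,j,k}$ for $\phi_{i,j,k}$ using only split~\labelcref{eq::rec_2} (for $\phi$) and its dual (for $\phi^*$), \cref{alg::merge_circuits}, and Huffman coding (\cref{theorem::huffman}) once $k\le j+1$; by the duality principle its circuit for $\phi^*_{i,j,k}$ has the same weight as $B_{i,j,k}$, and its final delay equals $\lceil\log_2 W(B_{0,0,m-1})\rceil$, which it bounds by \labelcref{our bound}. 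I would prove, by induction on $l$, the number of inputs of $\phi_{i,j,k}$, that $W(A_{i,j,k}) \le W(B_{i,j,k})$. Instantiating this at $\phi_{0,0,m-1}=g(t)$ and applying \cref{lem-soft-huffman} then gives that the returned circuit has delay at most $\lceil\log_2 W(A_{0,0,m-1})\rceil \le \lceil\log_2 W(B_{0,0,m-1})\rceil$, hence at most \labelcref{our bound}.

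For $k\in\{j,j+1\}$ the claim is immediate: there $\phi_{i,j,k}$ is a multi-input \AND{}, and \cref{line::huffman_coding} returns a \emph{weight-minimum} \huffmancircuit{} by \cref{theorem::huffman}, which $B_{i,j,k}$ cannot undercut. For the inductive step, let $\lambda$ be the parameter used by \cite{BrennerHermann2019} at $\phi_{i,j,k}$; then $B_{i,j,k}$ is the output of \cref{alg::merge_circuits} with $\circ=\AND{}$ on its circuits for $\phi_{i,j,j+2\lambda}$ (\AND{}-output) and $\phi^*_{j+1,j+2\lambda+1,k}$ (\OR{}-output). By the induction hypothesis, $A_{i,j,j+2\lambda}$ and the dual of $A_{j+1,j+2\lambda+1,k}$ are \huffmancircuit{}s with exactly these output gate types and of weight at most those of the two sub-circuits of $B_{i,j,k}$. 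The key auxiliary fact is that \cref{alg::merge_circuits} is \emph{monotone}: the weight of its output is a function of only the weights and output gate types of its two inputs $C_1,C_2$, non-decreasing in $W(C_1)$ and $W(C_2)$ — each predecessor of the new gate $c_0$ either re-uses the predecessors of $\out(C_i)$ (contributing in total $W(C_i)$) or is the single output of a fan-in-$2$ circuit of delay exactly $\lceil\log_2 W(C_i)\rceil$ built by \cref{lem-soft-huffman} (contributing $2^{\lceil\log_2 W(C_i)\rceil}$; exactness uses that all arrival times are integral). Since \cref{alg::main_algorithm} enumerates split~\labelcref{eq::rec_2} with this $\lambda$, the resulting candidate $C\in\mathcal C$ has $\out=\AND{}$ and, by monotonicity, $W(C)\le W(B_{i,j,k})$, so $W(A_{i,j,k})\le W(C)\le W(B_{i,j,k})$. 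Including the splits~\labelcref{eq::rec_3} and the remaining values of $\lambda$ only enlarges $\mathcal C$ and hence only helps — this is what makes our solutions usually strictly better, cf.\ \cref{delay-comp} — but it is not needed for the worst-case bound.

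\emph{Running time.} The loops of \cref{alg::main_algorithm} range over the $\mathcal O(m^3)$ triples $(i,j,k)$. The $\mathcal O(m^2)$ triples with $k\in\{j,j+1\}$ cost one Huffman-coding call each, i.e.\ $\mathcal O(m\log_2 m)$; each of the remaining triples invokes \cref{alg::merge_circuits} once for each of the $\mathcal O(m)$ valid $\lambda$ in each of the two splits, giving $\mathcal O(m^4)$ merge calls in total. By the monotonicity above it suffices to store, per table entry, only its weight $W(\cdot)$, so every merge call just evaluates a closed-form expression in $\mathcal O(1)$ time and constructs no circuit. Only the single final circuit is materialized, in \cref{line::final_huff}, by tracing back the recorded minimizing splits: one realizes the required intermediate \huffmancircuit{}s (each table entry at most once, each such step being one edge-rehooking plus possibly one Huffman call) and applies \cref{lem-soft-huffman} once, which does not dominate the $\mathcal O(m^4)$ spent on merge calls. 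Hence the overall running time is $\mathcal O(m^4)$.

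I expect the main obstacle to be the bookkeeping inside the inductive step: stating the monotonicity of \cref{alg::merge_circuits} precisely while keeping track of which stored circuit ($A$ versus $O$, primal versus dualized) plays the role of each sub-circuit of $B_{i,j,k}$, so that output gate types match up and the duality principle is invoked exactly where needed. A secondary point that needs care is to confirm that the construction of \cite{BrennerHermann2019} is genuinely expressible in terms of our splits and \cref{alg::merge_circuits} — so that each $B_{i,j,k}$ indeed occurs as a candidate — and that its delay bound \labelcref{our bound} is really an upper bound on $\lceil\log_2 W(B_{0,0,m-1})\rceil$.
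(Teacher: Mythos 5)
Your proposal follows essentially the same route as the paper's (sketch) proof: an induction over sub-instances showing that every table entry is at least as good as the corresponding circuit of \cite{BrennerHermann2019} (via the monotonicity of \cref{alg::merge_circuits} in the weights of its inputs), plus the same $\mathcal O(m^4)$ accounting of merge calls evaluated in $\mathcal O(1)$ on weights only. The extra detail you supply (the explicit weight recursion $W(C)=f_1(W(C_1))+f_2(W(C_2))$ and the caveats about type/duality bookkeeping) is a faithful elaboration of what the paper leaves implicit.
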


We conjecture that a stronger theoretical delay bound can be proven for our algorithm.

In \cref{sec:experiments},
we will see that in our practically applied logic optimization framework,
the running time of \cref{alg::main_algorithm} is negligible.

In order to take care of the circuit size,
we can modify \cref{alg::main_algorithm}  as follows:
For each sub-instance $\phi_{i, j, k}$, we
store not just one circuit with the best delay per output gate type, but all non-dominated
circuits. Here, circuit $C$ dominates circuit $C'$
if both weight and size of $C$ are at least as good as in $C'$
and if the gate types of $\generalout{}(C)$ and $\generalout{}(C')$ coincide.
In the end, we choose $C$ to be the smallest among all weight-optimum circuits.
This does not affect the
delay of the circuit (and \cref{theorem::our_approx_gurantee}
still holds), but often reduces its size.

\section{Logic Optimization Framework}\label{sec:flow}

\begin{figure*}[t]
\centering
\includegraphics[width=0.9\textwidth]{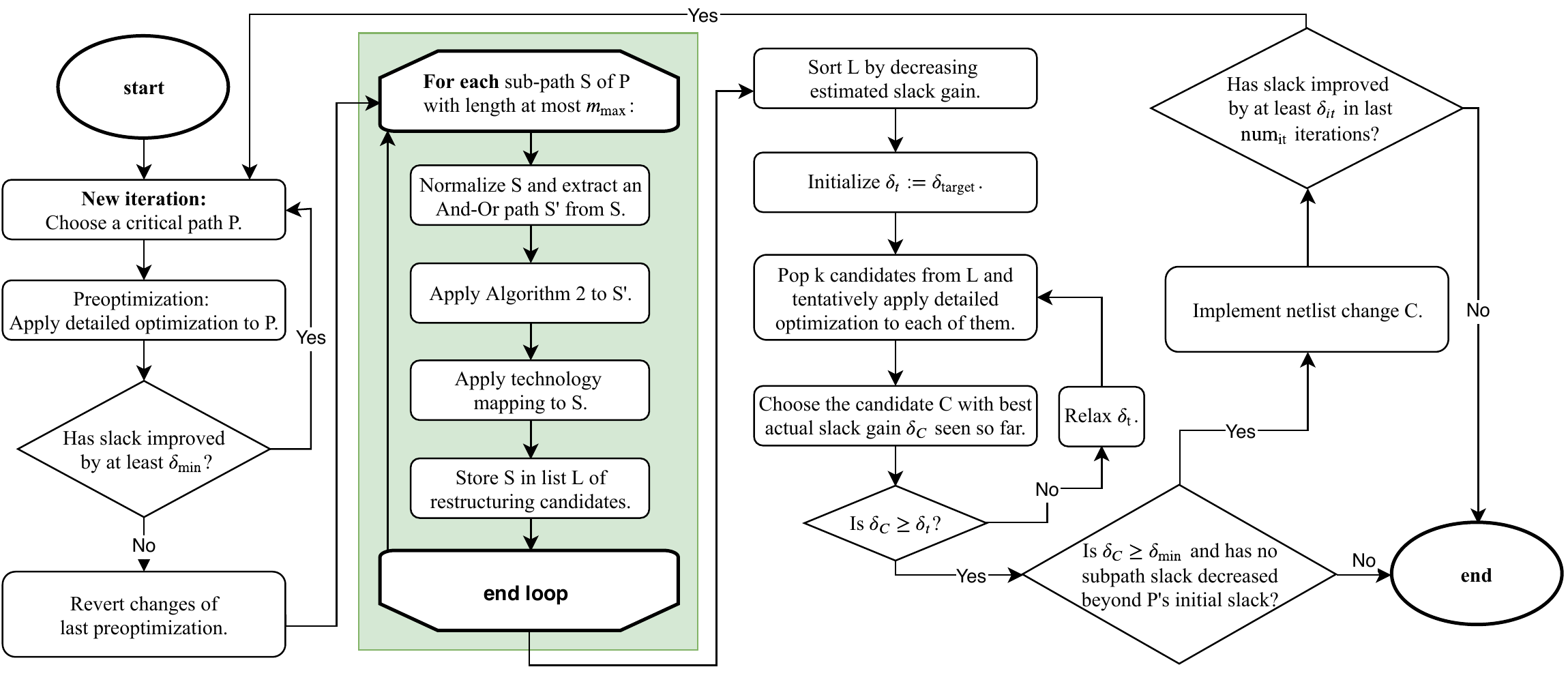}
\caption{Flow chart for our logic optimization framework (cf.~\cref{sec:flow}) with the path restructuring step in green.
 }
 \label{fig:flow_chart}
\end{figure*}

We propose a timing optimization framework (cf.~\cref{fig:flow_chart})
based on Werber et al.~\cite{WerberEtal2007}
with \cref{alg::main_algorithm}
as an essential component
that is used in production
in a late pre-routing stage of an industrial physical design flow.
Our framework revises the logical structure of critical paths
using placement and timing information.
In \cref{sec:normalization}, we adapt the delay model used in \cref{alg::main_algorithm}
to respect placement, buffering and gate sizing effects.
As we do not fully account for different kinds of gates or different gate sizes that might be available,
our framework involves a technology mapping step (\cref{sec:technology_mapping})
and powerful gate sizing and buffering routines (\cref{sec:detailed_opt}).

We iteratively optimize the worst slack of the currently most timing-critical combinational path
until overall worst slack does not improve significantly anymore.
A single \emph{iteration} works as follows:

Let $P$ denote a most critical path.
During a \emph{preoptimization} step,
we first try to improve the slack of $P$ without changing its logical structure
in order to diminish disruptions.
To this end, we apply \emph{detailed optimization} to $P$ as described in \cref{sec:detailed_opt}.
If a threshold slack improvement of $\deltamin$ is exceeded,
we keep the changes imposed by preoptimization and start the next iteration.

Otherwise, we discard the preoptimization's changes and perform the \emph{path restructuring} step
(central, green part of \cref{fig:flow_chart}).
This step works on internal data structures;
the netlist is not changed before detailed optimization (\cref{sec:detailed_opt}).
We consider the possibility to optimize any sub-path $S$ of $P$
up to a maximum length of $\maxpathlength$.
First, we apply a \emph{normalization} (\cref{sec:normalization})
in order to extract an \aop{} $S'$ from $S$
on which we run \cref{alg::main_algorithm}.
Then, the \emph{technology mapping} routine from \cite{Elbert2017}
(see also \cref{sec:technology_mapping}) locally modifies $S$
to benefit from all available gate types.
After having optimized all sub-paths of $P$,
we store all restructuring possibilities in a list $L$, sorted by decreasing estimated slack gain.

For only the most promising fraction of restructuring options,
we apply the time-consuming \emph{detailed optimization} (cf. \cref{sec:detailed_opt}).
First, we tentatively apply detailed optimization
to the topmost $k$ candidates in $L$.
If the actual slack gain of the best solution exceeds $\targetdelta$,
we choose this solution;
otherwise, we iteratively decrease $\targetdelta$ by a fixed value
and try out the next $k$ candidates in $L$
until we reach $\targetdelta$ or $L$ is empty.
Afterwards, we choose the restructuring candidate $C$ with best actual slack gain $\delta_C$ for $P$
among all detailed-optimized solutions.
This way, we usually apply detailed optimization to only a few instances,
but still find the overall best restructuring option.
If $\delta_C \geq \deltamin$ and if no side path slack has worsened beyond the initial slack of $P$,
we implement this netlist change,
possibly retaining parts of $P$ needed for side outputs.
If the change is implemented and the slack gain over the last $\itrange$ iterations exceeds a threshold $\deltarangeit$,
we start the next iteration;
otherwise, we stop.

Note that this is a simplified flow description.
E.g., in practice, we optimize the second critical path or the most critical latch-to-latch path
when $P$ cannot be further optimized.

\subsection{Normalization} \label{sec:normalization}

Our \aop{} optimization algorithm from \cref{sec:aop alg}
expects as an input an alternating path of \AND{}2 and \OR{}2 gates
with prescribed input arrival times,
and assumes that gates have a unit delay and connections do not impose any delay.
However, the most critical path $P$ contains arbitrary gates with varying delays,
and the physical locations of the path inputs might be far apart,
inducing undeniably high wire delays even after buffering.
A \emph{normalization} step thus transforms $P$ into a piece of netlist
whose core part is an \aop{} with appropriately modified input arrival times.

As we work on the most critical path,
the buffering routine applied in \cref{sec:detailed_opt}
will compute delay-optimum solutions.
Thus, we can assume a linear wire delay
and estimate the wire delay between two physical positions $p_1$ and $p_2$
by $\ddist \cdot ||p_1 - p_2||_1$ for a constant $\ddist \in \R$.
The traversal time through a gate is approximated by a constant $\dgate \in \R$.
The constants $\dgate$ and $\ddist$ are chosen based on an analysis of typical values
on the respective design.
As on the critical path, there are rather low fan-outs and slews,
the delay of gates with different types and sizes still varies,
but not much in comparison to the differences in arrival times.
Hence, assuming a realistic constant gate delay suffices
to determine the logical structure of the circuit.

Since we work on the most timing-critical part of the design,
we place the circuit $C$ computed by \cref{alg::main_algorithm}
such that each path is embedded delay-optimally,
implying that each path from an input $t_i$ to $\out$
has a wire delay of $\ddist \cdot ||l(t_i) - l(\out)||_1$,
where $l$ indicates physical coordinates on the chip.
Thus, the delay of $C$ is
$\max_{Q \colon t_i \rightsquigarrow \out} \big\{
   a(t_i) + \ddist \cdot ||l(t_i) - l(\out)||_1 + \dgate \cdot |Q| \big\},$
where the maximum ranges over all paths $Q$ in $C$ from any input $t_i$ to $\out$.
Applying \cref{alg::main_algorithm} with modified arrival times
\begin{equation*} \label{mod ats}
a'(t_i) := \frac{1}{\dgate} \Big(a(t_i) + \ddist \cdot ||l(t_i) - l(\out)||_1\Big)
\end{equation*}
hence yields a circuit with optimum wire delay with respect to physical locations.
In fact, we choose a placement that is netlength-optimum among all delay-optimum placements:
We determine $l(\out)$ based on its successors in the netlist
and place each gate at the median position of its predecessors and $\out$.

Now, we can describe our normalization.
Let $x$ denote the most critical input of a sub-path $S$ of $P$.
We represent each gate in $S$ using $\AND2$ and $\INV$ gates only.
This does not necessarily yield a path,
but we can recover the original critical path by following the signal flow of $x$,
obtaining a path $S'$.
By applying De Morgan transformations in reverse topological order,
we ensure that
$S'$ contains $\AND{}2$ and $\OR{}2$ gates only,
possibly adding inverters at the inputs of $S'$.
We use Huffman coding (\cref{theorem::huffman})
on chains of $\AND{}2$ gates (or $\OR{}2$ gates) in $S'$
to move less critical gates into $S \backslash S'$,
respecting physical locations by modifying arrival times as above.
This way, $S'$ becomes an \aop{} that -- with input arrival times $a'$ --
can be passed to \cref{alg::main_algorithm}.
\cref{fig::normalization} depicts the normalization on a path $S$ (left)
containing inverters (bubbles), \NOR{}, and \OAI{} gates.
On the right, we show $S$ after normalization with the \aop{} $S'$ colored.

\begin{figure}

\newcommand{\subfigwidth}{0.36\columnwidth}
\newcommand{\picwidth}{0.97\columnwidth}
\newcommand{\atscale}{2}
\newcommand{\txtscale}{1.5}
\renewcommand{\showid}[2]{}
\newcommand{\showinputid}[2]{#1}

\centering

\adjustbox{valign=t}{
\begin{subfigure}[]{\subfigwidth}
\centering
\resizebox{\picwidth}{!}{%
\begin{tikzpicture}

\node[outer sep=0pt, scale=\txtscale] (i5) at (0.5, 3.2){$t_0$};
\node[outer sep=0pt, scale=\txtscale] (i4) at (1.5, 3.2){$t_1$};
\node[outer sep=0pt, scale=\txtscale] (i3) at (2.5, 3.2){$t_2$};
\node[outer sep=0pt, scale=\txtscale] (i2) at (3.5, 3.2){$t_3$};
\node[outer sep=0pt, scale=\txtscale, red] (i1) at (4.5, 3.2){$t_4$};
\node[outer sep=0pt, scale=\txtscale] (i0) at (5.5, 3.2){$t_5$};

\node[outer sep=0pt, nor gate US, draw, logic gate inputs=nn, rotate=270, thick] at (5,2) (nor1){};
\draw[thick] (i0) -- (nor1.input 1) node [midway, circle, draw=black, fill=white, scale=.5] {};
\draw[thick, red] (i1) -- (nor1.input 2) node [midway, circle, draw=black, fill=white, scale=.5] {};

\node[outer sep=0pt, nor gate US, draw, logic gate inputs=nnn, rotate=270, thick] at (4,0.75) (nor2){};
\draw[red, thick] (nor1.output) -- (nor2.input 1);
\draw[thick] (i2) -- (nor2.input 2);
\draw[thick] (i3) -- (nor2.input 3);

\node[outer sep=0pt, or gate US, draw, logic gate inputs=nn, rotate=270, thick] at (3,-0.5) (or2){};
\draw[thick, red] (nor2.output) -- (or2.input 1);
\draw[thick] (i4) -- (or2.input 2);

\node[outer sep=0pt, nand gate US, draw, logic gate inputs=nn, rotate=270, thick] at (2.75,-1.3) (nand5){};
\draw[thick] (i5) -- (nand5.input 2);
\end{tikzpicture}
}
\end{subfigure}
}
\adjustbox{valign=t}{
\begin{subfigure}[t]{\subfigwidth}
\centering
\resizebox{\picwidth}{!}{%
\begin{tikzpicture}

\node[outer sep=0pt, scale=\txtscale] (i5) at (0.5, 3.2){$t_0$};
\node[outer sep=0pt, scale=\txtscale] (i4) at (1.5, 3.2){$t_1$};
\node[outer sep=0pt, scale=\txtscale] (i3) at (2.5, 3.2){$t_2$};
\node[outer sep=0pt, scale=\txtscale] (i2) at (3.5, 3.2){$t_3$};
\node[outer sep=0pt, scale=\txtscale, red] (i1) at (4.5, 3.2){$t_4$};
\node[outer sep=0pt, scale=\txtscale] (i0) at (5.5, 3.2){$t_5$};

\node[fill=and_color, outer sep=0pt, and gate US, draw, logic gate inputs=nn, rotate=270, thick] at (5,2) (and1){};
\draw[thick] (i0) -- (and1.input 1);
\draw[thick, red] (i1) -- (and1.input 2);

\node[outer sep=0pt, or gate US, draw, logic gate inputs=nn, rotate=270, thick] at (3,2) (sideor){};
\draw[thick] (i2) -- (sideor.input 1);
\draw[thick] (i3) -- (sideor.input 2);

\node[fill=or_color, outer sep=0pt, or gate US, draw, logic gate inputs=nn, rotate=270, thick] at (4,1) (or1){};
\draw[red, thick] (and1.output) -- (or1.input 1);
\draw[thick] (sideor.output) -- (or1.input 2);

\node[fill=and_color, outer sep=0pt, and gate US, draw, logic gate inputs=nn, rotate=270, thick] at (3,0) (and4){};
\draw[thick, red] (or1.output) -- (and4.input 1);
\draw[thick] (i4) -- (and4.input 2) node [midway, circle, draw=black, fill=white, scale=.5] {};

\node[fill=or_color, outer sep=0pt, or gate US, draw, logic gate inputs=nn, rotate=270, thick] at (2,-1) (or2){};
\draw[thick, red] (and4.output) -- (or2.input 1);
\draw[thick] (i5) -- (or2.input 2) node [midway, circle, draw=black, fill=white, scale=.5] {};

\draw[thick, white] (or2.output) -- (2, -2);

\end{tikzpicture}
}
\end{subfigure}
}
\caption{A subpath $S$ of the critical path $P$ before (left)
and after normalization (right).
On the right, the extracted \aop{} $S'$ is colored.
Critical wires are drawn in red.}
\label{fig::normalization}
\end{figure}

\subsection{Technology Mapping} \label{sec:technology_mapping}

The purpose of our \emph{technology mapping} step is to change the newly created circuit locally
to improve worst slack and the physical area occupied by gates
by making use of all gates available on the design.
We use the dynamic programming algorithm from Elbert~\cite{Elbert2017}
which covers the input circuit by graphs representing the available gate types.
With respect to any fixed tradeoff of arrival time
(regarding our timing model from \cref{sec:normalization},
but with specific estimated delays per gate type)
and number of gates,
this algorithm computes an optimum technology mapping,
but the running time grows exponentially in the number
$l$ of gates with more than one successor.
In our application,
$l$ is usually very small, hence we can effort this running time
(cf. the end of \cref{sec:experiments}).
For constant $l$, \cite{Elbert2017}
also provides a fully polynomial-time approximation scheme.
On general circuits, computing a size-optimum technology mapping is NP-hard~\cite{Keutzer}.

\subsection{Detailed Optimization} \label{sec:detailed_opt}

Depending on the actual stage of the design,
our \emph{detailed optimization} step invokes buffering, layer assignment and gate sizing tools.
When used in late physical design,
we apply Held's gate sizing routine \cite{Held2009},
followed by the buffering tool with an integrated layer assignment
by Bartoschek et al.~\cite{Bartoschek2009}.
After buffering, we apply gate sizing again, in particular on newly inserted buffers.
As we work on the most critical fraction of the design,
$V_t$ assignment can be done conveniently by using the fastest gates available.

An incremental placement legalization makes sure that the placement remains legal throughout all netlist changes.

\section{Experimental Results}\label{sec:experiments}


In a first set of experiments, we examined the \aop{} optimization
algorithm from \cref{sec:aop alg} separately.
To this end, we created \aop{} instances with 4 to 28 inputs and
random integral arrival times chosen uniformly from the interval $[0, \# \text{inputs}]$.
For each number of inputs, we created 1000 instances.

We compared our results with the previously best methods
\cite{BrennerHermann2019}, \cite{HeldSpirkl2017}, and \cite{RautenbachEtal2006}.
For each instance, we ran all three algorithms
and compared the best result in terms of delay to our algorithm's output.
\cref{random-compare} visualizes our results.
Instances are grouped by their numbers of inputs,
and colors indicate the absolute delay difference of computed solutions.
Our algorithm covers all recursion options from \cite{HeldSpirkl2017,BrennerHermann2019,RautenbachEtal2006},
so our solutions can never be worse.
In fact, on almost all instances, the delay of our circuit is better,
and already for $18$ inputs, on every other instance better by $2$ or more.


\begin{figure}[bp]
\newcommand{\subfigwidth}{0.48\textwidth}
\begin{subfigure}[t]{\subfigwidth}
\centering
\includegraphics[width=0.96\textwidth]{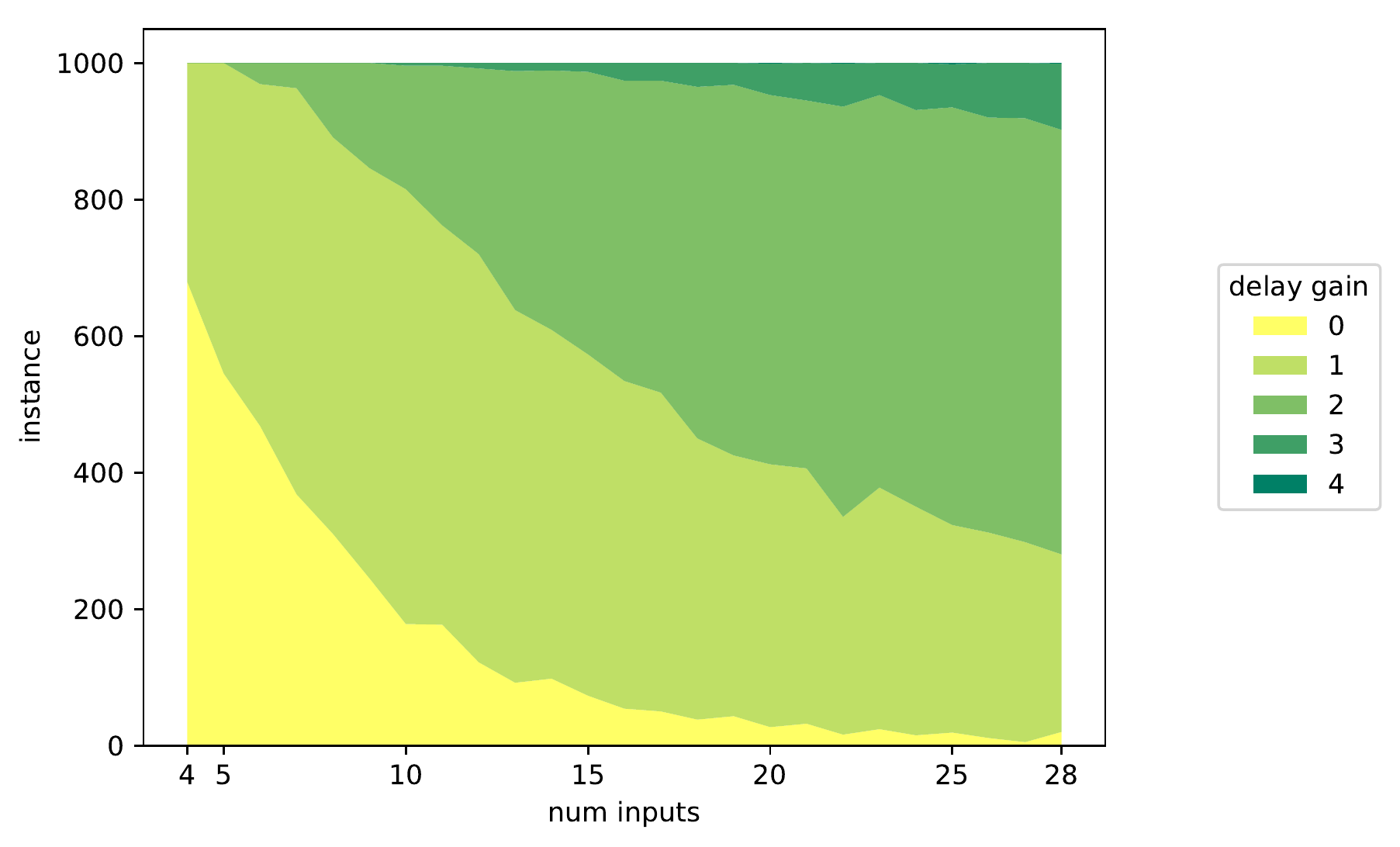}
\label{delay-comp}
\end{subfigure}
\caption{Delay gain of the solutions computed by \cref{alg::main_algorithm}
         compared to the best solution among \cite{HeldSpirkl2017,BrennerHermann2019,RautenbachEtal2006}
         on instances with random integral input arrival times.}
\label{random-compare}
\end{figure}

For each instance, we computed a lower bound on delay based
on the following ideas:
First, Kraft's inequality \cite{Kraft1949}
imposes a lower bound on the delay of any binary circuit;
secondly, we enumerate possible local gate configurations near the output
of an \aop{} circuit $C$ and recursively compute lower bounds for sub-circuits.
We compared our delay to the resulting lower bound.
Among all our solutions, \SI{89}{\percent} achieve the lower bound and hence are provably delay-optimum,
and only \SI{0.012}{\percent} exceed the lower bound by~$2$.

\cref{concrete-instance} compares our realization with \cite{HeldSpirkl2017} on an example instance.
In our circuit, the splits \labelcref{eq::rec_3}$^*$, \labelcref{eq::rec_2} and \labelcref{eq::rec_2}$^*$ were applied,
and the ability to optimize undetermined circuits was used twice.
This way, our delay of $22$ is better than the delay found by \cite{HeldSpirkl2017},
and it is even optimum since the input with arrival time $20$
has to traverse at least $2$ gates in any solution.
On this instance, we need one more gate than \cite{HeldSpirkl2017}.
In general, the number of gates used by our algorithm (with our modification for size reduction)
is typically higher than in
\cite{HeldSpirkl2017,BrennerHermann2019,RautenbachEtal2006},
but mostly in the range of \SI{20}{\percent}.

\input{instance_example.tex}

\begin{table}
\centering
\newcommand{\chipname}[2]{#2}
\renewcommand{\arraystretch}{0.8}
\setlength{\tabcolsep}{2pt}

\begin{tabular}{ llrrrrrrrrrrrrrr }
\toprule

Unit & Run & \multicolumn{1}{c}{WS [ps]} & \multicolumn{1}{c}{TS [ns]} & \multicolumn{1}{c}{\#\! Gates} & \multicolumn{1}{c}{Area} & \multicolumn{1}{c}{Netlength} & \multicolumn{1}{c}{ACE5} & \multicolumn{1}{c}{T [s]} \\
\midrule

%
%

\chipName{i1} & init & \rmUkn{201} & \rmUkn{15.3}  & \rmUkn{40636}  &                  &                   & \rmUkn[p]{85}  \\
              & LO   & \rmBtr{188} & \rmEql{15.3}  & \rmEql{40629}  & \rmEql[p]{-0.02} & \rmEql[p]{+0.00}  & \rmEql[p]{86} & \rmUkn{12}\\ \midrule
\chipName{i2} & init & \rmUkn{62}  & \rmUkn{52.2}  & \rmUkn{62185}  &                  &                   & \rmUkn[p]{96}  \\
              & LO   & \rmBtr{58}  & \rmEql{52.3}  & \rmEql{62187}  & \rmEql[p]{+0.02} & \rmEql[p]{+0.04}  & \rmEql[p]{96} & \rmUkn{11} \\ \midrule
\chipName{i3} & init & \rmUkn{109} & \rmUkn{192.9} & \rmUkn{69049}  &                  &                   & \rmUkn{107\%}  \\
              & LO   & \rmBtr{93}  & \rmBtr{189.4} & \rmEql{69066}  & \rmEql[p]{+0.01}& \rmEql[p]{+0.00}  & \rmEql{107\%}  & \rmUkn{273} \\ \midrule
\chipName{i4} & init & \rmUkn{5}   & \rmUkn{0.1}   & \rmUkn{78030}  &                  &                   & \rmUkn[p]{99}  \\
              & LO   & \rmBtr{0}   & \rmEql{0.0}   & \rmEql{77966}  & \rmEql[p]{-0.06} & \rmEql[p]{-0.07}  & \rmEql[p]{99} & \rmUkn{59} \\ \midrule
\chipName{i5} & init & \rmUkn{159} & \rmUkn{345.8} & \rmUkn{210828} &                  &                   & \rmUkn[p]{94}  \\
              & LO   & \rmBtr{152} & \rmBtr{343.4} & \rmEql{210852} & \rmEql[p]{+0.02} & \rmEql[p]{+0.00}  & \rmEql[p]{94} & \rmUkn{287} \\ \midrule
\chipName{i6} & init & \rmUkn{34}  & \rmUkn{13.0}  & \rmUkn{264744} &                  &                   & \rmUkn[p]{89}  \\
              & LO   & \rmBtr{20}  & \rmBtr{8.5}   & \rmEql{264724} & \rmEql[p]{+0.00} & \rmEql[p]{+0.01}  & \rmEql[p]{88} & \rmUkn{228} \\ \midrule
\chipName{i7} & init & \rmUkn{92}  & \rmUkn{251.5} & \rmUkn{272020} &                  &                   & \rmUkn[p]{96}  \\
              & LO   & \rmBtr{77}  & \rmBtr{230.1} & \rmEql{272242} & \rmEql[p]{+0.03} & \rmEql[p]{+0.06}  & \rmEql[p]{95} & \rmUkn{525} \\ \midrule
\chipName{i8} & init & \rmUkn{136} & \rmUkn{850.1} & \rmUkn{327807} &                  &                   & \rmUkn[p]{90}  \\
              & LO   & \rmBtr{120} & \rmBtr{833.1} & \rmEql{327916} & \rmEql[p]{+0.01} & \rmEql[p]{+0.02}  & \rmEql[p]{90} & \rmUkn{249} \\

%
%
\bottomrule
\end{tabular}
\caption{Performance of our logic restructuring framework on 7nm real-world instances.}
\label{table-elmore}
\end{table}

In a second set of experiments, we examined our logic optimization framework as a whole.
\cref{table-elmore} shows results on recent 7nm pre-routing designs using the RICE delay model.
The 'init' row displays the state of the chips as in our application
in industry:
a timing-driven placement has been computed, followed by various timing optimization steps,
among those our buffering and gate sizing sub-routines.
The initial netlist cannot be improved any further by classical timing optimization.
The '\BL' row shows results after applying our logic optimization flow to this netlist.
We see that worst slack (WS)
and the total sum of negative slacks (TS)
mostly improve significantly during logic optimization.
This does not disrupt global objectives as area, number of gates, netlength,
and routability, which barely change.
To check routability, we use the ACE5 estimate from \cite{GLARE},
the average congestion of the 5 \% most congested resources, weighted by usage,
computed by the global router from \cite{brg}.

Our program was implemented in C++,
and all tests were executed
on a machine with two Intel(R) Xeon(R) CPU E5-2667 v2 processors,
using a single thread.
In the last column (T), we show the total running time of our flow,
which is largely dominated by gate sizing
because it performs many expensive queries to the timing engine.
On any design, the total running time of all calls to \cref{alg::main_algorithm}
is less than $1$ second,
and less than $4$ seconds for the whole path restructuring step.
Per design, we consider roughly 1500 \aop{} restructuring instances with up to $13$ inputs.

\section{Conclusion}\label{sec:conclusion}

We presented a new approximation algorithm for delay optimization of \aop{}s
and a logic optimization framework using this algorithm to
improve critical paths in late physical design.
Regarding a simple, but realistic delay model,
our algorithm fulfills best known mathematical guarantees,
outperforms previously best approaches and is often optimum.
Results on industrial 7nm designs demonstrate that our logic optimization framework
improves timing when traditional timing optimization tools are at an end.

%
\bibliographystyle{plain}
\bibliography{extended}

\begin{thebibliography}{10}

\bibitem{AmaruEtal2017}
L.~Amar\'{u}, M.~Soeken, P.~Vuillod, J.~Luo, A.~Mishchenko, P.-E. Gaillardon,
  J.~Olson, R.~Brayton, and G.~De~Micheli.
\newblock Enabling exact delay synthesis.
\newblock {\em ICCAD}, pages 352--359, 2017.

\bibitem{Bartoschek2009}
C.~Bartoschek, S.~Held, D.~Rautenbach, and J.~Vygen.
\newblock Fast buffering for optimizing worst slack and resource consumption in
  repeater trees.
\newblock {\em ISPD}, pages 43--50, 2009.

\bibitem{BrennerHermann2019}
U.~Brenner and A.~Hermann.
\newblock Faster carry bit computation for adder circuits with prescribed
  arrival times.
\newblock {\em TALG}, 15(4):45:1--45:23, 2019.

\bibitem{Brent1982}
R.~P. Brent and H.-T. Kung.
\newblock A regular layout for parallel adders.
\newblock {\em Trans. Comput.}, 31(3):260--264, 1982.

\bibitem{Cortadella2003}
J.~Cortadella.
\newblock Timing-driven logic bi-decomposition.
\newblock {\em TCAD}, 22(6):675--685, 2003.

\bibitem{Elbert2017}
L.~Elbert.
\newblock {Aproximationsalgorithmen im Technology Mapping}.
\newblock Bachelor's thesis, University of Bonn, 2017.
\newblock German.

\bibitem{Golumbic1976}
M.~C. Golumbic.
\newblock Combinatorial merging.
\newblock {\em Trans. Comput.}, 25(11):1164--1167, 1976.

\bibitem{Grinchuk2009}
M.~I. Grinchuk.
\newblock Sharpening an upper bound on the adder and comparator depths.
\newblock {\em J. Appl. Ind. Math.}, 3(1):61--67, 2009.

\bibitem{Held2009}
S.~Held.
\newblock Gate sizing for large cell-based designs.
\newblock {\em DATE}, pages 827--832, 2009.

\bibitem{HeldSpirkl2017}
S.~Held and S.~Spirkl.
\newblock Fast prefix adders for non-uniform input arrival times.
\newblock {\em Algorithmica}, 77(1):287--308, 2017.

\bibitem{Huffman1952}
D.~A. Huffman.
\newblock A method for the construction of minimum-redundancy codes.
\newblock {\em Proc. Inst. Radio Eng.}, 40(9):1098--1101, 1952.

\bibitem{Keutzer}
K.~Keutzer and D.~Richards.
\newblock Computational complexity of logic synthesis and optimization.
\newblock {\em IWLS}, 1989.

\bibitem{Khrapchenko1970}
V.~M. Khrapchenko.
\newblock Asymptotic estimation of the addition time of a parallel adder.
\newblock {\em Systems Theory Research}, 19:105--122, 1970.

\bibitem{KoggeStone1973}
P.~M. Kogge and H.~S. Stone.
\newblock A parallel algorithm for the efficient solution of a general class of
  recurrence equations.
\newblock {\em Trans. Comput.}, 100(8):786--793, 1973.

\bibitem{Kraft1949}
L.~G. Kraft.
\newblock {\em A Device for Quantizing, Grouping, and Coding
  Amplitude-Modulated Pulses}.
\newblock PhD thesis, MIT, 1949.

\bibitem{Liuetal2003}
J.~Liu, S.~Zhou, H.~Zhu, and C.-K. Cheng.
\newblock An algorithmic approach for generic parallel adders.
\newblock {\em ICCAD}, pages 734--740, 2003.

\bibitem{MishchenkoEtal2011}
A.~Mishchenko, R.~Brayton, S.~Jang, and V.~Kravets.
\newblock Delay optimization using sop balancing.
\newblock {\em ICCAD}, pages 375--382, 2011.

\bibitem{brg}
D.~M{\"u}ller, K.~Radke, and J.~Vygen.
\newblock Faster min--max resource sharing in theory and practice.
\newblock {\em MPC}, 3(1):1--35, 2011.

\bibitem{Plazaetal2008}
S.~M. Plaza, I.~L. Markov, and V.~Bertacco.
\newblock Optimizing non-monotonic interconnect using functional simuation and
  logic restructuring.
\newblock {\em ISPD}, pages 92--102, 2008.

\bibitem{RautenbachEtal2006}
D.~Rautenbach, C.~Szegedy, and J.~Werber.
\newblock Delay optimization of linear depth {Boolean} circuits with prescribed
  input arrival times.
\newblock {\em JDA}, 4(4):526--537, 2006.

\bibitem{Royetal2014}
S.~Roy, M.~Choudhury, R.~Puri, and D.~Z. Pan.
\newblock Towards optimal performance-area trade-off in adders by synthesis of
  parallel prefix structures.
\newblock {\em TCAD}, pages 1517--1530, 2014.

\bibitem{Stoketal1996}
L.~Stok, D.~Kung, D.~Brand, A.~D. Drumm, A.~J. Sullivan, L.~Reddy, N.~Hieter,
  D.~J. Geiger, H.~H. Chao, and P.~J. Osler.
\newblock {BooleDozer}: Logic synthesis for {ASICs}.
\newblock {\em IBM Journal of Research and Development}, 40:407--430, 1996.

\bibitem{GLARE}
Y.~Wei, C.~Sze, N.~Viswanathan, Z.~Li, C.~J. Alpert, L.~Reddy, A.~D. Huber,
  G.~E. Tellez, D.~Keller, and S.~S. Sapatnekar.
\newblock Glare: Global and local wiring aware routability evaluation.
\newblock {\em DAC}, pages 768--773, 2012.

\bibitem{WerberEtal2007}
J.~Werber, D.~Rautenbach, and C.~Szegedy.
\newblock Timing optimization by restructuring long combinatorial paths.
\newblock {\em ICCAD}, pages 536--543, 2007.

\bibitem{YehJen2003}
W.-C. Yeh and C.-W. Jen.
\newblock Generalized earliest-first fast addition algorithm.
\newblock {\em Trans. Comput.}, pages 1233--1242, 2003.

\end{thebibliography}

%

\end{document}